\newcommand{\etal}{\emph{et al.}\xspace}
\newcommand{\cyao}{\ensuremath{cY(\theta)}\xspace}
\newcommand{\cyaoOneTwenty}{\ensuremath{cY(2\pi/3)}\xspace}
\newcommand{\cyaopi}{\ensuremath{cY(\pi)}\xspace}
\newcommand{\cyaoangle}[1]{\ensuremath{cY(#1)}\xspace}
\newcommand{\creflect}{\ensuremath{c^*}\xspace}
\newcommand{\vreflect}{\ensuremath{v^*}\xspace}
\newcommand{\spanningRationCyao}{\ensuremath{6.0411}\xspace}
\title{Continuous Yao Graphs}
\author{
Luis Barba\thanks{School of Computer Science, Carleton University. Email: \texttt{jit@scs.carleton.ca, jdecaruf@cg.scs.carleton.ca, andre@cg.scs.carleton.ca, sander@cg.scs.carleton.ca}. Research supported in part by NSERC and Carleton University's President's 2010 Doctoral Fellowship.}\ \thanks{D\'epartement d'Informatique, Universit\'e Libre de Bruxelles. Email:  \texttt{lbarbafl@ulb.ac.be}}
\and
Prosenjit Bose\footnotemark[1]
\and
Jean-Lou De Carufel\footnotemark[1]
\and
Mirela Damian\thanks{Department of Computing Sciences, Villanova University. Email: \texttt{mirela.damian@villanova.edu}. Research supported by NSF grant CCF-1218814.}
\and
Rolf Fagerberg\thanks{Department of Computer Science, University of Southern Denmark. Email: \texttt{rolf@imada.sdu.dk}.}
\and
Andr\'e van Renssen\footnotemark[1]
\and
Perouz Taslakian\thanks{School of Science and Engineering, American University of Armenia. Email: \texttt{perouz.taslakian@ulb.ac.be}.}
\and
Sander Verdonschot\footnotemark[1]
}
\begin{document}
\thispagestyle{empty}
\maketitle

\begin{abstract}
In this paper, we introduce a variation of the well-studied Yao graphs.
Given a set of points $S\subset \mathbb{R}^2$ and an angle $0 < \theta \leq 2\pi$,
we define the \emph{continuous Yao graph} $\cyao$ with vertex set $S$ and angle $\theta$ as follows. For each $p,q\in S$, we add an edge from $p$ to $q$ in $\cyao$ 
if there exists a cone with apex $p$ and aperture $\theta$ such that $q$ is the closest point to $p$ inside this cone.

We study the spanning ratio of \cyao for different values of $\theta$.
Using a new algebraic technique, we show that $\cyao$ is a spanner when $\theta \leq 2\pi /3$. We believe that this technique may be of independent interest. We also show that $\cyaopi$ is not a spanner, and that $\cyao$ may be disconnected for $\theta > \pi$.

% In this paper, we study the spanning ratio of \cyao for different values of $\theta$. 
% While we show that $\cyao$ is not a spanner for $\theta \geq \pi$,
% we prove that $\cyao$ is a spanner provided that $\theta \leq 2\pi /3$.
% In particular, to bound the spanning ratio of $\cyaoangle{2\pi/3}$, we introduce a new algebraic technique that is of independent interest.
\end{abstract}

\section{Introduction}
Let $S$ be a set of points in the plane.
The complete geometric graph with vertex set $S$ has a straight-line edge connecting each pair of points in $S$. 
%Therefore, this graph contains a path between any two points of $S$ whose length is exactly the Euclidean distance between its endpoints.
%While this graph provides the shortest way to connect two points of $S$, its main disadvantage is that it contains a quadratic number of edges. Therefore, several methods have been developed to construct graphs that ``approximate'' the complete geometric graph while having a ``small'' number of edges.
Because the complete graph has quadratic size in terms of number of edges, several methods for ``approximating'' this graph with a graph of linear size have been proposed. 

A geometric $t$-spanner $H$ of $S$ is a spanning subgraph of the complete geometric graph of $S$ with the property that for all pairs of points $p$ and $q$ of $S$, the length of the shortest path between $p$ and $q$ in $H$ is at most $t$ times the Euclidean distance between $p$ and $q$.

The \emph{spanning ratio} of a spanning subgraph is the smallest $t$ for which this subgraph is a $t$-spanner. %Spanners find many applications, such as approximating shortest paths or minimum spanning trees. 
For a comprehensive overview of geometric spanners and their applications, we refer the reader to the book by Narasimhan and Smid \cite{NS06}.

A simple way to construct a $t$-spanner is to first partition the plane around each point $p\in S$ into a fixed number of cones\footnote{The orientation of the cones is the same for all vertices.} and then add an edge connecting $p$ to a closest vertex in each of its cones. These graphs have been independently introduced by Flinchbaugh and Jones~\cite{flinchbaugh1981strong} and Yao~\cite{yao1982constructing}, and are referred to as \emph{Yao graphs} in the literature.
%
%As the number of cones increases, the Yao graphs have more edges in common with the complete geometric graph. 
%Therefore, intuition suggests that their spanning ratio depends on the number of cones used.
%Indeed, 
It has been shown that Yao graphs are good approximations of the complete geometric graph
% and their spanning ratio has received lot of attention in the literature~
\cite{clarkson1987approximation,althofer1993sparse,bose2004approximating,bose2012piArxiv,damian2012yao,bose2012pi,el2009yao,barba2014new}.

We denote the Yao graph defined on $S$ by $Y_k$, where $k$ is the number of cones, each having aperture $\theta = 2\pi/k$.
Clarkson~\cite{clarkson1987approximation} was the first to remark that $Y_{12}$ is a $1 + \sqrt{3}$-spanner in 1987.
Alth{\"o}fer~\etal~\cite{althofer1993sparse} showed that for every $t > 1$, there is a $k$ such that $Y_k$ is a $t$-spanner. 
For $k > 8$, Bose~\etal~\cite{bose2004approximating} showed that $Y_k$ is a geometric spanner with spanning ratio at most $1 / (\cos \theta - \sin \theta)$. 
This was later strengthened to show that for $k>6$, $Y_k$ is a $1/(1-2\sin(\theta/2))$-spanner~\cite{bose2012piArxiv}. 
Damian and Raudonis~\cite{damian2012yao} proved a spanning ratio of $17.64$ for $Y_6$, which was later improved by Barba~\etal to $5.8$~\cite{barba2014new}. In~\cite{barba2014new} the authors also improve the spanning ratio of $Y_k$ for all odd values of $k\geq 5$ to $1/(1-2\sin (3\theta/8))$. In particular, they show an upper bound on the spanning ratio for $Y_5$ of $2+\sqrt{3}\approx 3.74$.
Bose~\etal~\cite{bose2012pi} showed that $Y_4$ is a $663$-spanner. For $k < 4$, El Molla~\cite{el2009yao} showed that there is no constant $t$ such that $Y_k$ is a $t$-spanner. 

Yao graphs are based on the implicit assumption that all points use identical cone orientations with respect to an extrinsic fixed direction. From a practical point of view, if these points represent wireless devices and edges represent communication links for instance, the points would need to share a global coordinate system to be able to orient their cones identically. Potential absence of global coordinate information adds a new level of difficulty by allowing each point to spin its cone wheel independently of the others. In this paper we take a first step towards reexamining Yao graphs in light of intrinsic cone orientations, by introducing a new class of graphs called \emph{continuous Yao graphs}.

Given an angle $0< \theta\leq 2\pi$, the continuous Yao graph with angle $\theta$, denoted by $\cyao$, is the graph with vertex set $S$, and an edge connecting two points $p$ and $q$ of $S$ if there exists a cone with angle $\theta$ and apex $p$ such that $q$ is the closest point to $p$ inside this cone. In contrast with the classical construction of Yao graphs, for the continuous version the orientation of the cone is arbitrary. We can imagine rotating a cone with angle $\theta$ around each point $p\in S$ and connecting it to each point that becomes the closest to $p$ inside the cone during this rotation. To avoid degenerate cases, we assume \emph{general position}, i.e., we assume that for each $p\in S$, there are no two points at the same distance from $p$.

In contrast with the Yao graph, the continuous Yao graph has the property that $\cyao \subseteq \cyaoangle{\gamma}$ for any $\theta \geq \gamma$. This property provides consistency as the angle of the cone changes and could be useful in potential applications requiring scalability. Another advantage of continuous Yao graphs over regular Yao graphs is that they are invariant under rotations of the input point set. However, unlike Yao graphs that guarantee a linear number of edges, continuous Yao graphs may have a quadratic number of edges in the worst case. (Imagine, for instance, the input points evenly distributed on two line segments that meet at an angle $\alpha < \pi$. For any $\theta < \alpha$, \cyao includes edges connecting each point on one line segment to each point on the other line segment.)

In this paper, we focus on the spanning ratio of the continuos Yao graph.
%In Section~\ref{section:Small cones}, we show that for $t =  \frac{1}{1 - 2 \sin (\theta/4)}$, $\cyao$ is a $t$-spanner provided that $\theta < 2\pi/3$. The argument used in this section however, breaks when $\theta = 2\pi/3$. To deal with this case, we introduce a new algebraic technique based on the description of the regions where induction can be applied.
In Section~\ref{section:Small cones}, we show that $\cyao$ has spanning ratio at most $1/(1-2\sin(\theta/4))$ when $\theta < 2\pi/3$. 
However, the argument used in this section breaks when $\theta = 2\pi/3$. To deal with this case, we introduce a new algebraic technique based on the description of the regions where induction can be applied.
To the best of our knowledge, this is the first time that algebraic techniques are used to bound the spanning ratio of a graph. 
As such, our technique may be of independent interest. In Section~\ref{section:The 2pi/3 cyao}, we use this technique to show that $\cyaoangle{2\pi/3}$ is a $\spanningRationCyao$-spanner.
In Section~\ref{Section:Other angles}, we study the case when $\theta>2\pi/3$. Using elliptical constructions, we are able to show that $\cyaoangle{\pi}$ is not a constant spanner. While the algebraic techniques presented in Section~\ref{section:The 2pi/3 cyao} appear to extend beyond $2\pi/3$, it remains open whether or not $\cyao$ with angle $2\pi/3 <\theta < \pi$ is a constant spanner.
Finally, we study the connectivity of $\cyao$ and show that $\cyao$ is connected provided that $\theta \leq \pi$. Moreover, for $\theta > \pi$, there exist point sets for which $\cyao$ is not connected.

\vspace{-0.5em}
\section{Continuous Yao for narrow cones}\label{section:Small cones}

In this section, we study the spanning ratio of \cyao for $\theta < 2\pi/3$. 
In this case, we make use of an inductive proof similar to those used to bound the spanning ratio of Yao graphs~\cite{barba2014new}.

\vspace{-0.2em}
\begin{lemma}
 \label{lem:basicyao}\emph{[Lemma 1 of~\cite{barba2014new}]}
Let $a$, $b$ and $c$ be three points such that $|ac| \leq |ab|$ and $\angle bac \leq \alpha < \pi$. Then\vspace{-.1in}
$$ |bc| \leq |ab| - \left( 1 - 2 \sin (\alpha/2) \right)  |ac|~. $$
\end{lemma}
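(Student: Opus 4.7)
The plan is to use an auxiliary point on segment $ab$ together with the triangle inequality. First I would introduce the point $c'$ on segment $ab$ satisfying $|ac'|=|ac|$; this is well-defined because the hypothesis $|ac|\leq |ab|$ guarantees that $c'$ lies on the segment (between $a$ and $b$). Then by construction $|bc'|=|ab|-|ac|$.

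Next I would examine the triangle $acc'$, which is isosceles with $|ac|=|ac'|$ and apex angle $\angle cac'=\angle bac\leq \alpha$. A direct computation on this isosceles triangle gives the base length
$$|cc'|=2\,|ac|\sin\!\left(\tfrac{\angle bac}{2}\right)\leq 2\,|ac|\sin(\alpha/2),$$
where the inequality uses the monotonicity of $\sin$ on $[0,\pi/2]$ together with $\angle bac\leq \alpha<\pi$.

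Finally I would apply the triangle inequality in triangle $bcc'$:
$$|bc|\leq |bc'|+|cc'|\leq (|ab|-|ac|)+2\,|ac|\sin(\alpha/2)=|ab|-\bigl(1-2\sin(\alpha/2)\bigr)|ac|,$$
which is exactly the claimed bound.

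There is no real obstacle to this proof; the only thing one needs to verify carefully is that $c'$ actually lies on segment $ab$ (so that $|bc'|=|ab|-|ac|$ rather than $|ab|+|ac|$), and this is immediate from $|ac|\leq |ab|$. The bound is tight when $c'=c$, i.e., when $c$ lies on segment $ab$, in which case both sides equal $|ab|-|ac|$.
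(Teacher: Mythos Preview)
Your argument is correct and is in fact the standard proof of this inequality. The paper itself does not prove the lemma at all; it merely quotes it from~\cite{barba2014new}, so there is nothing to compare against here.

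One small correction to your closing remark: the bound is \emph{not} tight when $c'=c$. In that case $\angle bac=0$, so $|bc|=|ab|-|ac|$, whereas the right-hand side equals $|ab|-(1-2\sin(\alpha/2))|ac|$, which is strictly larger whenever $\alpha>0$. Equality actually occurs when $|ac|=|ab|$ (so $c'=b$) and $\angle bac=\alpha$: then both the triangle-inequality step and the $\sin$-monotonicity step are equalities. This does not affect the validity of the proof itself.
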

Given two points $a$ and $b$ of \cyao,
let $C_{ab}$ be the cone with apex $a$ and $b$ on its angle bisector.
The cone $C_{ba}$ is defined analogously.

\begin{theorem}
 \label{thm:contyaosmall}
 The graph \cyao has spanning ratio at most $1 / (1 - 2 \sin (\theta/4))$ for $0 < \theta < 2\pi/3$.
\end{theorem}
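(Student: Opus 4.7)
The plan is to mimic the standard inductive proof used for Yao graphs, exploiting the fact that in the continuous setting we may choose the cone orientation freely. For a pair $p, q \in S$, I would work with the particular cone $C_{pq}$ of aperture $\theta$ with apex $p$ and $q$ on its angle bisector. Let $r$ be the point of $S \setminus \{p\}$ closest to $p$ inside $C_{pq}$; since $q$ itself lies in $C_{pq}$, such a point $r$ exists and satisfies $|pr| \leq |pq|$. By definition of \cyao the edge $pr$ belongs to the graph. Moreover, because $q$ lies on the angle bisector of $C_{pq}$, the angle $\angle qpr$ is at most $\theta/2$.

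Setting $\alpha = \theta/2$ in Lemma~\ref{lem:basicyao} (with $a = p$, $b = q$, $c = r$) therefore yields
\[
|rq| \;\leq\; |pq| - \bigl(1 - 2\sin(\theta/4)\bigr)\,|pr|.
\]
The hypothesis $\theta < 2\pi/3$ ensures $1 - 2\sin(\theta/4) > 0$, and since $r \ne p$ we have $|pr| > 0$, so in fact $|rq| < |pq|$ strictly.

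With this inequality in hand, I would proceed by induction on the rank of $|pq|$ among the pairwise distances of the points of $S$. The base case $p = q$ is trivial. In the inductive step, the pair $(r,q)$ has strictly smaller distance, so the induction hypothesis provides a path in \cyao from $r$ to $q$ of length at most $|rq|/(1 - 2\sin(\theta/4))$. Prepending the edge $pr$ and substituting the bound above gives a path from $p$ to $q$ of length at most
\[
|pr| + \frac{|rq|}{1 - 2\sin(\theta/4)} \;\leq\; |pr| + \frac{|pq| - (1 - 2\sin(\theta/4))|pr|}{1 - 2\sin(\theta/4)} \;=\; \frac{|pq|}{1 - 2\sin(\theta/4)},
\]
which is the claimed bound.

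There is no serious obstacle: the essential gain over the classical Yao argument is that the freedom to center the cone on $q$ halves the worst-case angle $\angle qpr$, replacing $\sin(\theta/2)$ by $\sin(\theta/4)$ in the denominator. The argument breaks down precisely at $\theta = 2\pi/3$, where the coefficient $1 - 2\sin(\theta/4)$ vanishes so the inductive ``progress'' estimate degenerates; this is exactly the threshold at which the authors resort to their new algebraic technique.
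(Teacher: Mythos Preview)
Your proposal is correct and follows essentially the same approach as the paper: induct on the (rank of the) distance $|pq|$, center the cone on $q$ so that $\angle qpr \le \theta/2$, and then apply Lemma~\ref{lem:basicyao} with $\alpha = \theta/2$ to get the recursive inequality. The only cosmetic difference is that the paper takes the closest pair as its base case rather than $p=q$, but the argument is otherwise identical.
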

\begin{proof}
 We need to show that there exists a path of length at most $1 / (1 - 2 \sin (\theta/4))  |ab|$ between any two vertices $a$ and $b$. We prove this by induction on the distance $|ab|$. In the base case $a$ and $b$ form the closest pair. Hence, the edge $ab$ is added by any cone of~$a$ that contains $b$, as no other vertex can be closer to~$a$.
 
 For the inductive step, we assume that the theorem holds for any two vertices whose distance is less than $|ab|$. If the edge $ab$ is in the graph, the proof is finished, so assume that this is not the case. That means that there is a vertex closer to $a$ in every cone with apex $a$ that contains $b$. In particular, this also holds for the cone $C_{ab}$. Let $n_a$ be the vertex that is closest to $a$ in $C_{ab}$. Since $C_{ab}$  has aperture $\theta$, the angle $\angle n_aab$ is at most $\theta / 2$, and Lemma~\ref{lem:basicyao} gives us that 
 $|bn_a| \leq |ab| - (1 - 2 \sin (\theta/4))  |an_a|$. Note that since $\theta < 2\pi/3$, we have that $\theta/4 < \pi/6$, which means that $1 - 2 \sin (\theta/4) > 0$ and hence $|bn_a| < |ab|$. Therefore our inductive hypothesis applies to $n_a$ and $b$, which tells us that there exists a path between them of length at most $1 / (1 - 2 \sin (\theta/4)) |bn_a|$. Adding the edge $an_a$ to this path yields a path between $a$ and $b$ of length at most

\vspace{-1em}
\begin{align*}
& \, |an_a| + \frac{1}{1 - 2 \sin (\theta/4)} |bn_a| \leq\\
& \, |an_a| + \frac{1}{1 - 2 \sin (\theta/4)}\left( |ab| - (1 - 2 \sin (\theta/4))  |an_a| \right)=\\
& \, |an_a| + \frac{1}{1 - 2 \sin (\theta/4)} |ab| - |an_a|
= \, \frac{1}{1 - 2 \sin (\theta/4)} |ab|.
\end{align*}
This completes the proof.
\end{proof}

\section{The graph $\boldsymbol{\cyaoOneTwenty}$ is a Spanner}\label{section:The 2pi/3 cyao}

Let $t\approx \spanningRationCyao$ be the largest root of the polynomial
$p(t) = -25 + 90 t - 39 t^2 - 246 t^3 + 363 t^4 + 138 t^5 - 589 t^6 + 
 216 t^7 + 291 t^8 - 204 t^9 - 84 t^{10} + 6 t^{11} + 2 t^{12}.$
In this section, we prove that \cyaoOneTwenty is a $t$-spanner.
That is, we show that for any two points $a$ and $b$ in \cyaoOneTwenty, there exists a path from $a$ to $b$ of length at most $t\,|ab|$. The way we derive this polynomial will become clear by the end of this section.

The proof proceeds by induction on the rank of the distance $|ab|$ among all distances between vertices of \cyaoOneTwenty.
In the base case,
$a$ and $b$ define the closest pair among the points of \cyaoOneTwenty.
Hence, the edge $ab$ is added by any cone of $a$ that contains $b$,
as no other vertex can be closer to $a$.

We spend the remainder of this section proving the inductive step.
Assume that the result holds for any two points whose distance is smaller than $|ab|$.
Without loss of generality,
assume that $a=(0,0)$ and $b=(1,0)$,
so that $|ab| = 1$.
We start with a simple observation that follows from the general position assumption.
Define $I_{ab} = \{p\in \mathbb{R}^2 : |ap| + t|pb| \leq t|ab|\}$ be the \emph{inductive set of $a$ with respect to $b$}
(see Fig.~\ref{fig:Inductive Region}).
\begin{figure}[t]
\centering
\includegraphics[width=0.45\linewidth]{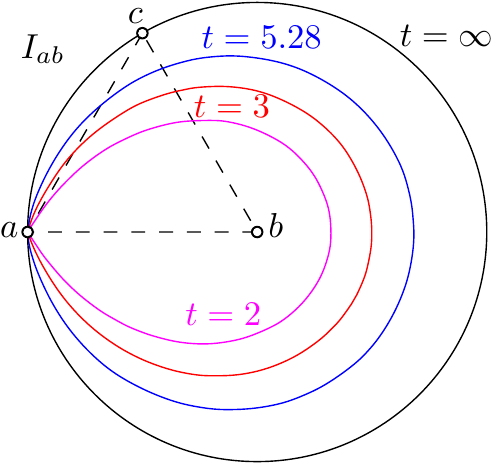}
%[width=1\textwidth]
\caption{\small The inductive set $I_{ab}$ for different values of $t$.}
\label{fig:Inductive Region}
\end{figure}
Symmetrically,
let $I_{ba} = \{p\in \mathbb{R}^2 : |bp| + t|pa| \leq t|ba|\}$ be the inductive set of $b$ with respect to $a$.

\begin{lemma}\label{lemma:Contained in circle}
The inductive set $I_{ab}$ is contained in the disk $D$ with center $b$ and radius $|ab|$.
Moreover, any point $p\neq a$ on the boundary of $D$ lies outside of $I_{ab}$.
\end{lemma}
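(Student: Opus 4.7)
The plan is to unpack the defining inequality of $I_{ab}$ directly; no geometric argument seems needed, since both parts of the statement are immediate algebraic consequences of the definition combined with the nonnegativity of Euclidean distance (and with $t>0$, which holds since $t\approx \spanningRationCyao$).

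For the first claim, I would start from the defining inequality $|ap|+t|pb|\leq t|ab|$ for a point $p\in I_{ab}$. Using only $|ap|\geq 0$, this yields $t|pb|\leq t|ab|$, and dividing by $t>0$ gives $|pb|\leq |ab|$. Hence $p$ lies in the closed disk $D$ of radius $|ab|$ centered at $b$, which establishes $I_{ab}\subseteq D$.

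For the second claim, I would consider a point $p$ on the boundary of $D$, so $|pb|=|ab|$. Substituting this into the defining inequality and simplifying yields $|ap|\leq 0$, which forces $|ap|=0$, i.e., $p=a$. Since $|ab|$ is the radius of $D$, the point $a$ does lie on $\partial D$, so the only point of $\partial D$ that can belong to $I_{ab}$ is $a$ itself. Consequently, every $p\in \partial D$ with $p\neq a$ lies outside $I_{ab}$, as desired.

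There is really no obstacle here: the argument is a two-line verification. The only conditions being used are $|ap|\geq 0$ and $t>0$, both of which hold without further assumption. I would present both parts in a single short paragraph following the statement of the lemma, since the result is used purely as a convenient geometric reformulation of the inductive hypothesis in later arguments.
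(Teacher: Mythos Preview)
Your proposal is correct and follows essentially the same approach as the paper: both arguments simply drop the nonnegative term $|ap|$ from the defining inequality and divide by the positive constant $t$. The paper compresses your two parts into one by assuming $p\neq a$ from the outset (so $|ap|>0$) and obtaining the strict inequality $|pb|<|ab|$ directly, but the content is identical.
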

\vspace{-0.8em}
\begin{proof}
Let $p \neq a$ be a point in $I_{ab}$.
Because $|ap|>0$ and $t > 1$, we have that 
$t  |pb| < |ap| + t|pb| \leq t|ab|$.
Consequently,
$p$ lies strictly inside the circle with center $b$ and radius $|ab|$.
\end{proof}

\vspace{-0.5em}
Recall that $C_{ab}$ denotes the cone with apex $a$ and $b$ on its angle bisector.
Let $n_a$ and $n_b$ be the neighbors of $a$ and $b$ in cones $C_{ab}$ and $C_{ba}$, respectively.
The inductive set $I_{ab}$ satisfies the \emph{inductive property}:
if $n_a \in I_{ab}$,
then there is a path from $a$ to $b$ with length at most $t|ab|$.
Indeed,
because $n_a \in I_{ab}$, Lemma~\ref{lemma:Contained in circle} implies that $|n_a b| < |ab|$.
Therefore,
we can apply the induction hypothesis and obtain a path from $n_a$ to $b$ of length at most $t|n_a b|$. 
Because $n_a\in I_{ab}$, adding the edge $an_a$ to this path yields a path from $a$ to~$b$ 
of length at most $|an_a| + t|n_a b| \leq t|ab|$ as desired.
The inductive set $I_{ba}$ has an analogous inductive property.

Note that if $n_a \in I_{ab}$
or $n_b \in I_{ba}$,
then we are done by the inductive property.
Thus, we assume that $n_a \not\in I_{ab}$ and $n_b \not\in I_{ba}$.
Since $a = (0,0)$ and $b = (1,0)$,
the set of points on the boundary of $I_{ab}$ satisfy
\begin{align}
\nonumber
 & ((-2 + x) x + y^2)^2 \, t^4+ \,   (x^2 + y^2)^2&\\ 
\label{eq:quartic}
- \, & 2 (2 + (-2 + x) x + y^2) (x^2 + y^2) \, t^2
 = 0,
\end{align}
which defines a quartic curve in $x$ and $y$. 
Let $c$ and $c^*$ be the intersection points of the boundaries of $C_{ab}$ and $C_{ba}$ and assume that $c$ lies above $c^*$;
see Fig.~\ref{fig:Configuration}.
\begin{figure}[t]
\centering
\includegraphics[width=0.6\linewidth]{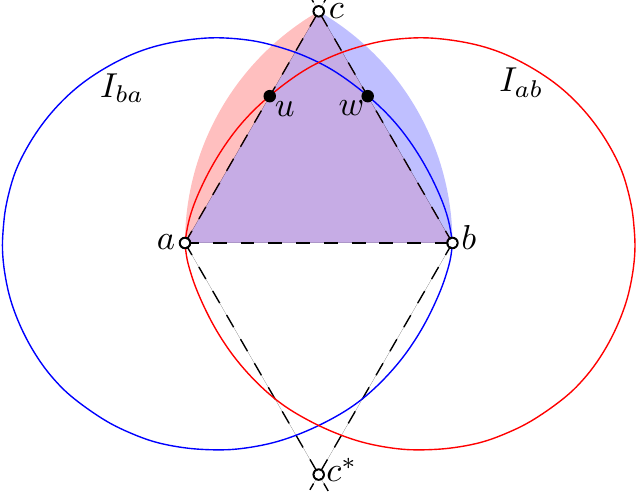}
%[width=1\textwidth]
\caption{\small The inductive sets $I_{ab}$ and $I_{ba}$ are shown. The circular sectors where $n_a$ and $n_b$ can lie are depicted in light blue and light red, respectively.}
\label{fig:Configuration}
\end{figure}
Because the triangles $\triangle abc$ and $\triangle abc^*$ are equilateral,
we have $c = (1/2, \sqrt{3}/2)$ and $c^* = (1/2, -\sqrt{3}/2)$.
Let
\begin{equation}
\label{eq:Coordinates of u}
u = \left ( \frac{t(t-2)}{2(t^2-1)}, \frac{\sqrt{3} \, t(t-2)}{2(t^2-1)}   \right ) \approx (0.3438, 0.5956)
\end{equation}
be the intersection point of the boundary of $I_{ab}$ with the segment $ac$. 
Symmetrically,
let
$$w =\left ( 1-\frac{t(t-2)}{2(t^2-1)}, \frac{\sqrt{3} \, t(t-2)}{2(t^2-1)}   \right ) \approx (0.6561, 0.5956)$$
be the intersection of the boundary of $I_{ba}$ with the segment $bc$.
There are two cases to deal with. Either $(i)$ $n_a$ and $n_b$ lie on the same side of the $x$-axis
or $(ii)$ they lie on opposite sides.

Given three points $x$, $y$ and $y'$ such that $|xy| = |xy'|$,
we denote by $\mathcal{C}(x,y,y')$ the circular sector with apex $x$ that is contained between $xy$ and $xy'$,
counter-clockwise.\vspace{.05in}

\textbf{Case $\boldsymbol{(i)}$} %Without loss of generality,
Assume first that $n_a$ and $n_b$ both lie above the $x$-axis.
Because $n_a$ and $n_b$ lie in the circular sectors $\mathcal C(a,b,c)$ and $\mathcal C(b, c, a)$, respectively, we have that $|n_a n_b| < |ab|$.
Therefore, we can apply induction on $n_an_b$ to obtain a path $\varphi_{n_an_b}$ from $n_a$ to $n_b$ of length at most $t|n_a n_b|$.
Consider the path $\varphi_{ab} = an_a\cup \varphi_{n_an_b}\cup n_b b$ from $a$ to $b$. 
We show that the length of $\varphi_{ab}$ is at most $t|ab| = t$.
To this end, we provide a bound on the length of the segment $n_an_b$.

\vspace{-0.5em}
\begin{lemma}\label{lemma:Case one maximized length}
In the configuration of Case $(i)$ depicted in Fig.~\ref{fig:Configuration},
$|n_an_b| \leq |uc| = |wc| = |uw|$.
\end{lemma}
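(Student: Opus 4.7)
The plan is to identify the curvilinear region in which $n_a$ (respectively $n_b$) must lie, and then to bound $|n_an_b|$ by $|uc|$. Since $n_a$ is the closest point to $a$ in $C_{ab}$ and $b\in C_{ab}$, we have $|an_a|\leq|ab|=1$; combined with the Case~$(i)$ hypothesis that $n_a$ lies above the $x$-axis, this places $n_a$ in the sector $\mathcal{C}(a,b,c)$. Because $n_a\notin I_{ab}$, we actually have $n_a$ in the curvilinear triangle $A:=\mathcal{C}(a,b,c)\setminus I_{ab}$, whose corners are $u$, $c$, and the point $v_a$ where $\partial I_{ab}$ meets the arc of the sector, and whose three sides are the segment $uc$, an arc of the unit circle centered at $a$ from $c$ to $v_a$, and a piece of the quartic~\eqref{eq:quartic} from $v_a$ to $u$. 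By symmetry across $x=1/2$, $n_b$ lies in an analogous curvilinear triangle $B$ with corners $w$, $c$, $v_b$.

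The equalities $|uc|=|wc|=|uw|$ are verified directly from~\eqref{eq:Coordinates of u} and its $w$-analogue: $u$ and $w$ share the $y$-coordinate $\sqrt{3}\,t(t-2)/(2(t^2-1))$, so $uw$ is horizontal, and a short calculation yields $|uc|=|uw|=|wc|=(2t-1)/(t^2-1)$; hence $u$, $c$, $w$ form an equilateral triangle.

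To show $|n_an_b|\leq|uc|$ for all $n_a\in A$ and $n_b\in B$, I plan a case analysis on the boundary pieces of $A$ and $B$, using the outward-normal optimality condition that at any maximizer $n_a-n_b$ must be aligned with the outward normals of $A$ at $n_a$ and $B$ at $n_b$. When $n_a$ lies in the interior of the segment $uc$, the restriction of $|n_an_b|^2$ to $uc$ is a convex quadratic, so its maximum is attained at an endpoint; when $n_a$ lies on the circular arc, maximizing $|n_an_b|^2 = 1 + |n_b|^2 - 2\, n_a \cdot n_b$ reduces to minimizing $n_a \cdot n_b$ on the arc, whose extremum falls at an endpoint since the diametrically opposite direction is not in the admissible angular range. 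The main obstacle is the case when $n_a$ or $n_b$ lies in the interior of the quartic side~\eqref{eq:quartic}: the optimality condition then yields an algebraic system on the Cartesian oval, and this is precisely where the algebraic technique announced earlier in the section enters, with the polynomial $p(t)=0$ defining $t$ serving to certify that no such configuration exceeds the corner bound. A final evaluation at the nine corner pairs then confirms the maximum is $|uc|$, attained for instance at $(n_a,n_b)=(u,w)$.
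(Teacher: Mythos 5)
Your setup (the curvilinear regions for $n_a$ and $n_b$, the computation $|uc|=|uw|=|wc|=(2t-1)/(t^2-1)$, and the endpoint arguments on the straight side $uc$ and on the circular arc) is fine, but the proof has a genuine gap exactly at the place you flag as ``the main obstacle'': the side of the region lying on the quartic boundary of $I_{ab}$. You give no argument there; you only assert that the optimality system on the Cartesian oval is ``certified'' by the polynomial $p(t)$. That is not correct: in the paper $p(t)$ arises from the final length inequality $2+t|u'c|\leq t|ab|$ in Case~$(ii)$, not from the maximization in this lemma, so it cannot play the role you assign to it, and without an actual treatment of the quartic side the claimed maximum at the corner pairs is unproved.

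The paper avoids this difficulty entirely with one observation you are missing: replace each curvilinear region by its convex hull, $N_a=\mathrm{conv}\bigl(\mathcal C(a,b,c)\setminus I_{ab}\bigr)$ and $N_b=\mathrm{conv}\bigl(\mathcal C(b,c,a)\setminus I_{ba}\bigr)$. Taking the hull replaces the quartic arc from $u$ to $v$ by the straight chord $uv$, and since the Euclidean distance is convex, the maximum of $|n_an_b|$ over $N_a\times N_b$ (which upper-bounds the maximum over your regions $A\times B$) is attained on the boundaries; a further farthest-point argument (for a point $x$ in the lune $N_a\setminus\triangle(u,v,c)$, the circle centered at its farthest point of $N_b$ through $x$ has radius less than $1$, hence excludes $c$ or $v$, so $x$ cannot be a maximizer) reduces the problem to the triangles $\triangle(u,v,c)$ and $\triangle(w,c,v')$, where checking the vertex pairs gives $(u,c)$, $(c,w)$, $(u,w)$ as maximum pairs. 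In your framework the same effect could be obtained by noting that relative-interior points of the quartic arc are not extreme points of the convex hull of your region (the oval bulges toward the region, so such points are convex combinations of nearby region points), hence a maximizer of the convex distance function cannot lie there; but some argument of this kind must be supplied, and the appeal to $p(t)$ must be dropped.
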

\vspace{-1em}
\begin{proof}
Recall that $n_a$ must lie in the circular sector $\mathcal C(a,b,c)$. Moreover,
because we assumed that $n_a$ lies outside of $I_{ab}$, $n_a$ lies in the region $\mathcal C(a, b,c) \setminus I_{ab}$. 
Let $N_a$ be the convex hull of $\mathcal C(a, b,c) \setminus I_{ab}$ and let $v$ be the intersection point between $I_{ab}$ and the circular arc of $\mathcal C(a, b, c)$; see Fig.~\ref{fig:Neighbor regions Case 1}. Analogously, let $v'$ be the intersection between $I_{ba}$ and the circular arc of $\mathcal C(b,c,a)$.
Then, $N_a$ is bounded by the segments $uc$, $uv$ and the circular arc joining $v$ and $c$ with center $a$ and radius 1. We define $N_b$ analogously as the convex hull of $\mathcal C(b, c, a)\setminus I_{ba}$.

\begin{figure}[htb]
\centering
\includegraphics[width=0.8\linewidth]{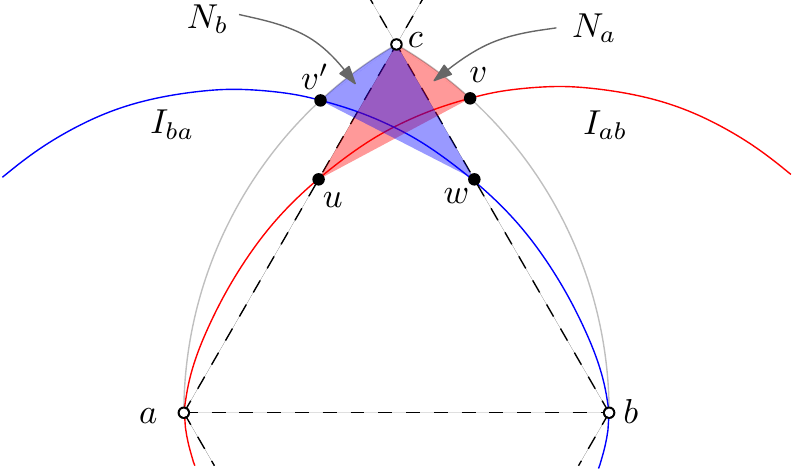}
\caption{\small The neighbor regions of $a$ and $b$ in Case $(i)$.}
\label{fig:Neighbor regions Case 1}
\vspace{-.1in}
\end{figure}

Because $n_a\in N_a$ and $n_b\in N_b$, we get an upper bound on the distance between $n_a$ and $n_b$ by computing the maximum distance between a point in $N_a$ and a point in $N_b$. We refer to two points realizing this distance as a  \emph{maximum $N_a$-$N_b$-pair}.
Since the Euclidean distance function is convex and since both $N_a$ and $N_b$ are convex sets, a maximum $N_a$-$N_b$-pair must have one point on the boundary of $N_a$ and another on the boundary of $N_b$.

In fact, we claim that we need only to consider the boundaries of the triangles $\triangle(u,v,c)\subset N_a$ and $\triangle(w, c, v')\subset N_b$ to find a maximum $N_a$-$N_b$-pair.
To prove this claim, consider the lune defined by $N_a\setminus \triangle(u,v,c)$. For any point $x$ in this lune, consider its farthest point $f(x)$ in $N_b$ and notice that the circle with center on $f(x)$ that passes through $x$ leaves either $c$ or $v$ outside (or both). This is because the radius of this circle is smaller than the radius of the circular arc on the boundary of $N_a$; see Fig.~\ref{fig:Neighbor regions Case 1}. Therefore, either $c$ or $v$ is farther than $x$ from $f(x)$ and hence, the maximum $N_a$-$N_b$-pair cannot have an endpoint in this lune. That is, the maximum $N_a$-$N_b$-pair includes a point on the boundary of the triangle $\triangle(u,v,c)$. The same argument holds for $\triangle(w, c, v')$ and $N_b$ proving our claim.

As we know the coordinates of the boundary vertices of $\triangle(u,v,c)$ and $\triangle(w, c, v')$, we can verify that 
$(u,c)$, $(c, w)$ and $(u, w)$ are all 
maximum $N_a$-$N_b$-pairs % is found either when $n_a = u$ and $n_b = c$, $n_a = c$ and $n_b = w$,
%or when $n_a = u$ and $n_b = w$
(notice that this is true for any $t > 1$).
\end{proof}

Because the length of $n_an_b$ is at most $|uc|$, and since $|an_a|$ and $|bn_b|$ are both at most 1,
the length of the path $\varphi_{ab} = an_a\cup \varphi_{n_an_b}\cup n_b b$ is at most $2+ t|uc|$
by Lemma~\ref{lemma:Case one maximized length}. 
We now prove that $2+ t|uc|\leq t|ab|$.
Since $a=(0,0)$, $b=(1,0)$, $c=(1/2,\sqrt{3}/2)$
and $|au| = \mu = \frac{t(t-2)}{t^2-1}$,
 the inequality $2+t|uc| \leq t|ab|$ is equivalent to 
  \vspace{-.1in}
 \begin{equation*} 
2+ t\left(1-\frac{t(t-2)}{t^2-1}\right) \leq t
 \vspace{-.1in}
\end{equation*}
which is true,
provided that
$t^3 -4t^2+2 \geq 0$ and $t>1$.
Since $t = \spanningRationCyao$ is bigger than the largest real root of $x^3 -4x^2+2$,
we are done.
Therefore,
whenever we are in the configuration of Case $(i)$,
we can apply induction and obtain a path $\varphi_{ab}$ from $a$ to $b$ of length at most $2 + t|uc| \leq t|ab|$.\vspace{.05in}

\textbf{Case $\boldsymbol{(ii)}$} The proof of Case $(ii)$ is a bit more involved but follows the same line of reasoning as the proof of Case $(i)$.
If $n_a$ and $n_b$ lie on different sides of $ab$, we can assume without loss of generality that~$n_a$ lies below the $x$-axis while $n_b$ lies above it.
Recall that $\creflect$ is the intersection of the boundaries of $C_{ab}$ and $C_{ba}$ that lies below the $x$-axis.

\begin{figure*}[htb]
\centering
\includegraphics[width=0.82\textwidth]{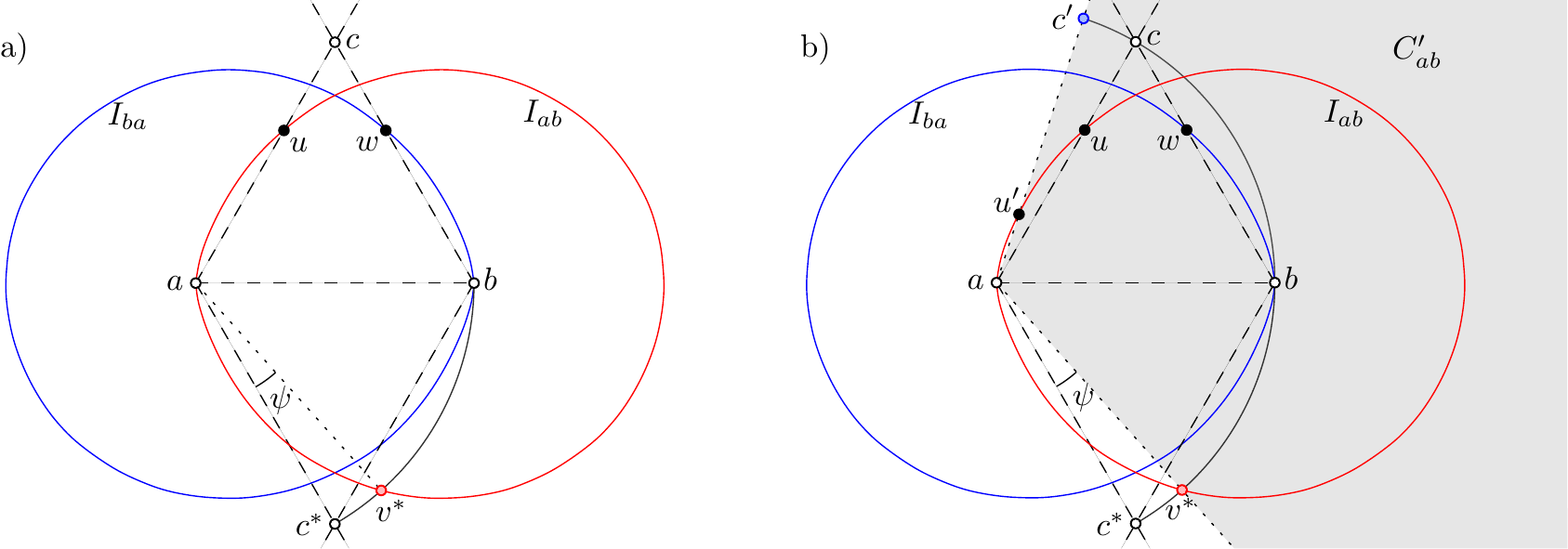}
%\caption{\small a) The point $\vreflect$, being the intersection of the boundary of $I_{ab}$ and circular arc of the circular sector $\mathcal{C}(a,\creflect,b)$, defines the angle $\psi = \angle \vreflect a \creflect$. b) We obtain a new cone $C'_{ab}$ by rotating $C_{ab}$ counter-clockwise $\psi$ degrees.}
\vspace{-0.5em}
\caption{\small a) Point $\vreflect$ and angle $\psi = \angle \vreflect a \creflect$ b) Cone $C'_{ab}$ is obtained by rotating $C_{ab}$ counter-clockwise $\psi$ degrees.}
\vspace{-0.5em}
\label{fig:Rotated Configuration}
\end{figure*}

Since $ab$ is not an edge of \cyaoOneTwenty,
$n_a$ must lie inside $\mathcal{C}(a,\creflect,b)$.
Let $\vreflect$ be the intersection of the boundary of $I_{ab}$ with the circular arc of $\mathcal{C}(a,\creflect,b)$; see Fig.~\ref{fig:Rotated Configuration}.
This intersection point always exists because
$b$ lies inside $I_{ab}$
and $\creflect$ lies outside of $I_{ab}$ by Lemma~\ref{lemma:Contained in circle}.
The circular arc of $\mathcal{C}(a,\creflect,b)$ is part of the circle defined by $x^2+y^2 = 1$.
Therefore,
from~\eqref{eq:quartic},
 \vspace{-.07in}
\begin{align}
\label{expression vreflect}
\vreflect =& \, \left(\frac{t^2+2t-1}{2t^2},-\frac{t-1}{2t^2}\sqrt{(t+1)(3t-1)}\right) \\
\nonumber
\approx& \, (0.6518, -0.7583) \enspace.
\vspace{-.1in}
\end{align}

Let $\psi = \angle \vreflect a \creflect$; see Fig.~\ref{fig:Rotated Configuration}a.
Since $\psi = \pi/3 - \angle ba\vreflect$,
from~\eqref{expression vreflect}
we have $tan(\psi) $\vspace{-.09in}
\begin{align}
\nonumber
=&\,  \tan(\pi/3 - \angle ba\vreflect)
\nonumber
= \,  \frac{\tan(\pi/3) - \tan(\angle ba\vreflect)}{1+\tan(\pi/3)\,\tan(\angle ba\vreflect)}\\ 
\label{expression tan psi}
= \, & \frac{\sqrt{3}\left(t^2+2t-1\right)-(t-1)\sqrt{(t+1)(3t-1)}}
{t^2+2t-1+\sqrt{3}(t-1)\sqrt{(t+1)(3t-1)}}
\end{align}
from which $\tan(\psi) \approx 0.1885$ and hence, $\psi \approx 10.6800^\circ$.
Consider the cone $C'_{ab}$
(respectively the point $c'$)
obtained by rotating $C_{ab}$
(respectively $c$)
counter-clockwise around $a$ by an angle $\psi$.
Note that $\mathcal{C}(a,\vreflect,b) \subset I_{ab}$; see Fig.~\ref{fig:Rotated Configuration}b.
Let $n_a'$ be the neighbor of $a$ inside $C'_{ab}$. 
If $n_a'$ lies inside $I_{ab}$, we are done by the inductive property.
Therefore, assume that $n_a' \not\in I_{ab}$.
Because $\mathcal{C}(a,\vreflect,b) \subset I_{ab}$, $n_a'$ cannot lie inside $\mathcal{C}(a,\vreflect,b)$ and hence,
$n_a'$ must lie above the $x$-axis. 
Let $N'_a$ be the convex hull of %$\mathcal C(a, \vreflect, b)\setminus I_{ab}$. Therefore, $n_a$ 
$\mathcal C(a, c', b) \setminus I_{ab}$. Then $n'_a$ 
must lie inside of $N'_a$; see Fig.~\ref{fig:Zoom Case 2} for an illustration.
As in Case $(i)$, $n_b$ must lie inside of the region $N_b$ being the convex hull of $\mathcal C(b, c, a) \setminus I_{ba}$.

Let $u' \in ac'$ be the intersection of the boundaries of $C'_{ab}$ and $I_{ab}$
(see Fig.~\ref{fig:Zoom Case 2}).
From~\eqref{expression tan psi},
the equation of the line supported by $a$ and $c'$ is
\begin{align*}
y &= \tan(\pi/3 + \psi)\,x = \frac{\tan(\pi/3) + \tan(\psi)}{1-\tan(\pi/3)\,\tan(\psi)} \,x\\
&= \frac{\sqrt{3}\left(t^2+2t-1\right)+(t-1)\sqrt{(t+1)(3t-1)}}
{-\left(t^2+2t-1\right)+\sqrt{3}(t-1)\sqrt{(t+1)(3t-1)}} \,x \enspace.
\end{align*}
Thus,
the $x$-coordinate of $u'$ is given by the expression
\begin{align*}
&\frac{1}{4t^2(t^2-1)}\Big(5t^4-2t^3+2t^2+2t-1\\
&\phantom{\frac{1}{4t^2(t^2-1)}\Big(}-\sqrt{3}(t-1)(t^2+4t-1)\sqrt{(t+1)(3t-1)}\Big)
\end{align*}
and the $x$-coordinate of $c'$ is given by the expression
$$\frac{-(t^2+2t-1)+\sqrt{3}(t-1)\sqrt{(t+1)(3t-1)}}{4t^2} \enspace.$$
Thus,
$u' \approx (0.1124, 0.3207)$
and $c'\approx(0.3308,0.9436)$.

A proof similar to that of Lemma~\ref{lemma:Case one maximized length} (moved to the appendix due to space constraints) yields the following result. % and can be found in the appendix.
\begin{lemma}\label{lemma:Maximum length Case 2}
In the configuration of Case $(ii)$, the distance between $n_a'$ and $n_b$ is at most $|u'c|$.
\end{lemma}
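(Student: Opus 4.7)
The plan is to mirror the strategy of Lemma~\ref{lemma:Case one maximized length} while handling the asymmetry caused by the $\psi$-rotation of the cone for $a$. Let $\tilde v$ denote the analogue of the point $v$ from Case $(i)$, namely the intersection of $\partial I_{ab}$ with the circular arc of $\mathcal C(a,c',b)$. Then $N_a'$ is a convex region bounded by the segments $u'c'$ and $u'\tilde v$ together with a unit-radius arc from $\tilde v$ to $c'$ centered at $a$, while $N_b$ coincides with the region from Case $(i)$, bounded by $wc$, $wv'$, and a unit-radius arc from $v'$ to $c$ centered at $b$.

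First I would invoke the convexity of the Euclidean distance on the product of convex sets to conclude that a maximum $N_a'$-$N_b$-pair lies on $\partial N_a' \times \partial N_b$. Next, I would excise the two circular lunes $N_a' \setminus \triangle(u', \tilde v, c')$ and $N_b \setminus \triangle(w, c, v')$ exactly as in Lemma~\ref{lemma:Case one maximized length}: for any $x$ in such a lune, the circle centered at its farthest partner $f(x)$ and passing through $x$ has radius strictly less than the radius $1$ of the bounding arc, so one of the triangle vertices $c'$, $\tilde v$ (respectively $c$, $v'$) lies outside this circle and is therefore farther from $f(x)$ than $x$ is. This reduces the search to pairs drawn from the two triangles $\triangle(u',\tilde v,c')$ and $\triangle(w,c,v')$.

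Since the Euclidean distance restricted to a product of line segments is convex and thus maximized at the endpoints, it suffices to compare the nine pairwise distances between the vertices $\{u', \tilde v, c'\}$ and $\{w, c, v'\}$. Using the explicit coordinates of $u'$ and $c'$ already derived, together with the coordinates of $\tilde v$ extracted from the quartic~\eqref{eq:quartic} and the expressions for $w$ and $v'$ from Case~$(i)$, I would then verify algebraically that $|u'c|$ dominates the other eight candidates.

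The main obstacle is this final verification. Unlike Case~$(i)$, where the symmetry $x\mapsto 1-x$ forced ties among three maximizing pairs, Case $(ii)$ is genuinely asymmetric: each competing inequality, for instance $|u'c| \geq |c'c|$ or $|u'c| \geq |u'v'|$, involves nested radicals in $t$ arising from~\eqref{expression vreflect} and~\eqref{expression tan psi}. Clearing radicals and simplifying, the tightest of these inequalities is precisely what drives the definition of the polynomial $p(t)$ and thereby calibrates the spanning ratio $\spanningRationCyao$.
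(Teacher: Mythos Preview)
Your outline through step three is sound and close in spirit to the paper's argument, but the last paragraph contains a genuine misconception, and the paper's proof differs in one structural simplification worth noting.

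On the difference: the paper does \emph{not} excise the lune on the $N_a'$ side. Instead it asserts directly that every point of $N_b$ is farther from $u'$ than from any other point of $N_a'$, so the maximizer on the $N_a'$ side is identified outright as $u'$. Only on the $N_b$ side does it run the lune argument (with the fixed center $u'$), reducing the search to the triangle $\triangle(w,c,v')$ and hence to three distances $|u'w|$, $|u'c|$, $|u'v'|$ rather than nine. Your symmetric two-lune reduction to $\triangle(u',\tilde v,c')\times\triangle(w,c,v')$ would also work, but you then need the preliminary bound that every pairwise distance between $N_a'$ and $N_b$ is strictly less than $1$ (to guarantee the farthest-point circle has smaller radius than the bounding arc); you invoke this implicitly without justifying it.

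On the error: the polynomial $p(t)$ is \emph{not} produced by the nine-way comparison among vertex pairs. The lemma only asserts $|n_a' n_b|\le |u'c|$, and the comparisons you would actually carry out are inequalities of the form $|u'c|\ge |c'v'|$, etc. The polynomial $p(t)$ arises \emph{after} the lemma, from the separate requirement that the induced path length satisfy $2+t\,|u'c|\le t\,|ab|$; clearing radicals in $|u'c|$ (via the exact coordinates of $u'$) is what yields $p(t)$. Conflating these two steps would lead you to compute the wrong inequalities when you sit down to finish the verification.
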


By Lemma~\ref{lemma:Maximum length Case 2}, the distance between $n_a'$ and $n_b$ is at most $|u'c|< 1$. Therefore, we can apply the induction hypothesis to obtain a path $\varphi_{n_a'n_b}$ from $n_a'$ to $n_b$ of length at most $t|n_a'n_b|$.

Let $\varphi_{ab} = a n_a' \cup \varphi_{n_a'n_b} \cup n_b b$ be a path from $a$ to $b$. 
Similarly to what we observed in Case $(i)$,
the length of $\varphi_{ab}$ is at most $2 + \varphi_{n_a'n_b} \leq 2 + t|u'c|$ by Lemma~\ref{lemma:Maximum length Case 2}.

We now prove that $2 + t|u'c| \leq t|ab|$.
Since $a=(0,0)$, $b=(1,0)$ and $c=(1/2,\sqrt{3}/2)$, using the exact expressions for $u'$
we find that $2 + t|u'c| \leq t|ab|$,
provided that $p(t)  = -25 + 90 t - 39 t^2 - 246 t^3 + 363 t^4 + 138 t^5 - 589 t^6 + 216 t^7 + 291 t^8 - 204 t^9 - 84 t^{10} + 6 t^{11} + 2 t^{12} \geq 0$.
Because we chose  $t\approx \spanningRationCyao$ to be equal to the largest real root of $p$, we infer that $2 + t|u'c| \leq t|ab|$.
Therefore,
whenever we are in the configuration of Case $(ii)$,
we can apply induction and obtain a path $\varphi_{ab}$ from $a$ to $b$ of length at most $2 + t|u'c| \leq t|ab|$.

\begin{figure}[b]%hpt
\centering
\includegraphics[width=0.8\linewidth]{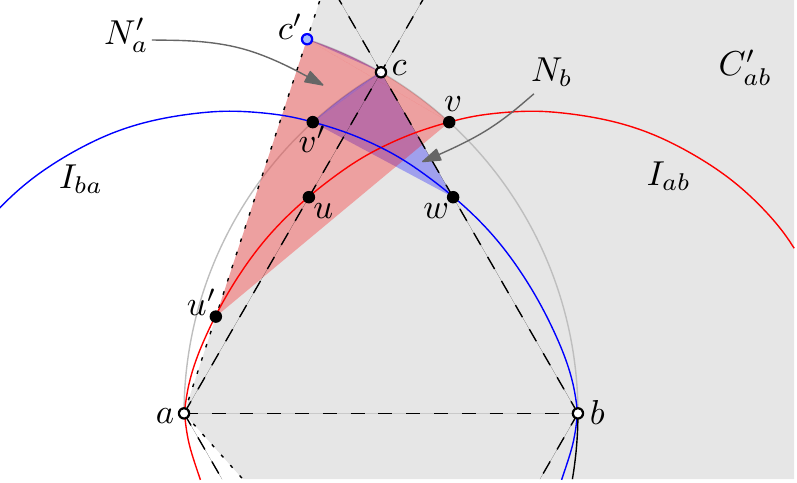}
\caption{\small $N'_a$, $N_b$ and maximum $N'_a$-$N_b$-pair $(u', c)$.}
\label{fig:Zoom Case 2}
\end{figure}

In summary,
given any two points $a$ and $b$ of \cyaoOneTwenty and a constant $t\approx \spanningRationCyao$,
%(refer to the beginning of the section for the exact value),
we can construct a path from $a$ to $b$ which uses edges of \cyaoOneTwenty and has length at most $t|ab|$. We obtain the following result.

\vspace{-0.3em}
\begin{theorem}
 \label{thm:contyao}
 The graph  \cyao has spanning ratio at most $\spanningRationCyao$ if $\theta = 2\pi/3$, or $\min\left\{\spanningRationCyao, \frac{1}{1 - 2 \sin (\theta/4)}\right\}$ if $\theta < 2\pi/3$.
\end{theorem}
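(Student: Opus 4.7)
The plan is to derive Theorem~\ref{thm:contyao} by stitching together two bounds that are (almost) already in hand: the $\spanningRationCyao$ upper bound that Cases $(i)$ and $(ii)$ of this section establish for $\cyaoOneTwenty$, and the narrow-cone bound $1/(1-2\sin(\theta/4))$ of Theorem~\ref{thm:contyaosmall}.

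For $\theta = 2\pi/3$, I would simply remark that the case analysis just carried out, together with the closest-pair base case stated at the start of the section, completes an induction on the rank of $|ab|$ showing the existence in $\cyaoOneTwenty$ of a path of length at most $\spanningRationCyao \cdot |ab|$ between any two vertices $a$ and $b$. This yields the first half of the theorem with no extra work.

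For $\theta < 2\pi/3$, the bound $1/(1-2\sin(\theta/4))$ is immediate from Theorem~\ref{thm:contyaosmall}. The second bound $\spanningRationCyao$ follows from the monotonicity of continuous Yao graphs noted in the introduction: because $\cyaoangle{\theta_1} \subseteq \cyaoangle{\theta_2}$ whenever $\theta_1 \geq \theta_2$, setting $\theta_1 = 2\pi/3$ and $\theta_2 = \theta$ gives $\cyaoOneTwenty \subseteq \cyao$, so any path built in $\cyaoOneTwenty$ transfers verbatim to $\cyao$ and the spanning-ratio bound transfers with it. Taking the minimum of the two bounds gives the claimed statement.

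The only nontrivial point, and thus the main (modest) obstacle, is to verify the direction of the monotonicity inclusion. This follows directly from the definition: if $q$ is the closest point to $p$ inside some cone $C$ of aperture $2\pi/3$, then the narrower subcone $C' \subseteq C$ of aperture $\theta < 2\pi/3$ whose angle bisector points toward $q$ still contains $q$ and no additional vertices of $S$, so $q$ remains the closest point to $p$ in $C'$ and the edge $pq$ is therefore also an edge of $\cyao$.
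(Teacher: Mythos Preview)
Your overall approach matches the paper's: combine the Section~\ref{section:The 2pi/3 cyao} induction for $\theta=2\pi/3$ with Theorem~\ref{thm:contyaosmall} and the inclusion $\cyaoOneTwenty\subseteq\cyao$ for $\theta<2\pi/3$. The paper states that inclusion in the introduction without proof and then simply records Theorem~\ref{thm:contyao} as a summary, so your write-up is in fact more explicit than the original.

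There is, however, a small gap in your verification of the inclusion. You take $C'$ to be the cone of aperture $\theta$ with apex $p$ whose bisector points toward $q$, and assert $C'\subseteq C$. This need not hold: if $q$ lies near a boundary ray of $C$, a $\theta$-cone centred on the direction of $q$ will spill outside $C$, and then a vertex of $S$ lying in $C'\setminus C$ could be closer to $p$ than $q$, so $q$ is not guaranteed to be closest in your $C'$. The fix is immediate: instead choose any cone $C'$ of aperture $\theta$ with $q\in C'\subseteq C$. Such a cone exists because $\theta<2\pi/3$ leaves $2\pi/3-\theta>0$ of angular slack inside $C$, so one can slide a $\theta$-cone within $C$ until it covers the direction of $q$. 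With $C'\subseteq C$, closest-in-$C$ trivially implies closest-in-$C'$, and the edge $pq$ lies in $\cyao$ as claimed.
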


\vspace{-1.5em}
\section{Larger angles}\label{Section:Other angles}

\vspace{-0.5em}
Theorem~\ref{thm:contyao} provides upper bounds for the spanning ratio of $\cyao$ for values of $\theta\leq 2\pi/3$.
But what happens when $\theta$ is larger than $2\pi/3$?
The next result shows that if $\theta$ is very large, the graph can be disconnected.

\begin{figure}[ht]
 \centering
 \includegraphics[width=0.7\linewidth]{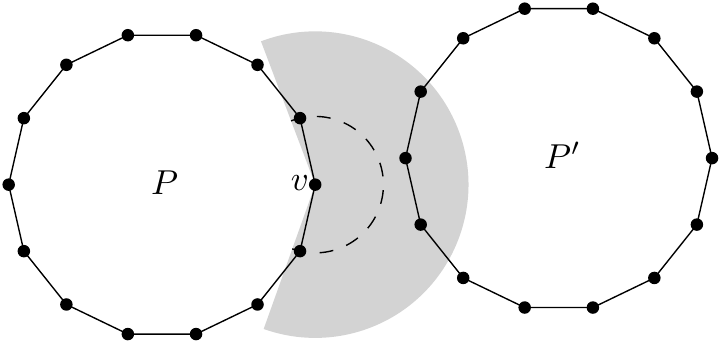}
 \caption{\small  \cyao\ can be disconnected when $\theta > \pi$.}
 \label{fig:Disconnected}
\end{figure}

\vspace{-0.5em}
\begin{theorem}
 For $\theta > \pi$, there are point sets for which \cyao is disconnected.
\end{theorem}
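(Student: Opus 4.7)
The plan is to construct, for each $\theta > \pi$, a point set whose $\cyao$ graph splits into two connected components. The guiding intuition is that when $\theta > \pi$, a cone of aperture $\theta$ omits only a wedge of aperture $2\pi - \theta < \pi$, so packing enough ``blocker'' points densely around any apex will force every cone from that apex to contain one of them. Concretely, I would set $\delta = \theta - \pi > 0$, pick an integer $N > \pi/\delta$, and place $N$ points evenly spaced on a unit circle centered at the origin (the first cluster $S_1$) together with $N$ more evenly spaced on a unit circle centered at $(D,0)$ (the second cluster $S_2$), for $D$ large, say $D > 4$. An arbitrarily small perturbation restores the general-position assumption used throughout the paper.

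The crux is to prove that no edge of $\cyao$ has one endpoint in $S_1$ and the other in $S_2$. Fix $p \in S_1$. First I would observe that the directions from $p$ to the other $N-1$ vertices of $S_1$ lie inside the open half-plane arc (of aperture $\pi$) bounded by the tangent line to the circle at $p$, and by an inscribed-angle computation are equally spaced by $\pi/N$. Next I would argue that any cone of aperture $\theta$ with apex $p$ intersects this half-plane arc in a sub-arc of aperture at least $\theta + \pi - 2\pi = \delta$, because two circular arcs whose lengths sum to more than $2\pi$ must overlap by at least the excess. Since $\delta > \pi/N$, a pigeonhole step gives that this sub-arc contains the direction to some other $S_1$ vertex, which sits at distance at most $2$ from $p$ and is therefore strictly closer than any $S_2$ vertex (at distance at least $D - 2$). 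Hence no cone from $p$ has an $S_2$ vertex as its nearest point, ruling out edges from $p$ into $S_2$; the symmetric argument handles edges from $S_2$. A short verification shows that within each cluster consecutive circle vertices form edges, so each cluster is internally connected.

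The delicate part is the pigeonhole translation: combining the $\delta$ lower bound on the cone-half-plane overlap with the $\pi/N$ spacing between marked directions is precisely what dictates the choice $N > \pi/\delta$. The only remaining points are mild technicalities---the regular $N$-gon fails general position because each vertex has two equidistant neighbors, so a negligible perturbation is needed, and $D$ must be chosen large enough that the within-cluster diameter (at most $2$) is dominated by the cross-cluster separation (at least $D - 2$).
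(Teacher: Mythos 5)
Your construction is essentially the paper's (two far-apart copies of a regular polygon, dense enough that every wide cone at a vertex catches a polygon neighbor), but the quantitative step has a genuine flaw. The overlap of the cone's direction arc (length $\theta$) with the half-plane arc of directions at $p$ (length $\pi$) is only guaranteed to have total measure at least $\delta$; it need not be a single sub-arc. Precisely, the complement of the cone is a wedge of aperture $2\pi-\theta=\pi-\delta<\pi$, and if that wedge sits strictly inside the half-plane arc, the intersection splits into two end-pieces whose lengths merely sum to $\delta$. Each piece can then be shorter than the spacing $\pi/N$, so the pigeonhole with $N>\pi/\delta$ does not go through. Equivalently: a cone of aperture $\theta$ at $p$ misses \emph{all} other $S_1$ vertices exactly when its complementary wedge contains the whole bundle of directions to them, and that bundle spans the interior angle $\pi-2\pi/N$; such a cone exists whenever $\pi-2\pi/N\leq\pi-\delta$, i.e.\ whenever $N\leq 2\pi/\delta$. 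Since your choice only guarantees $\pi/\delta<N$, for $N$ in the range $(\pi/\delta,\,2\pi/\delta]$ the vertex $p=(1,0)$ of $S_1$ admits a cone of aperture $\theta$ that avoids every other $S_1$ vertex while containing the direction toward $S_2$ (which lies near the positive $x$-axis), so $\cyao$ does acquire an edge from $S_1$ to $S_2$ and your claimed disconnection argument collapses.

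The fix is simply to take $N>2\pi/\delta$, so that the angular spread of the directions to the other cluster vertices exceeds $\pi-\delta$ and cannot fit inside the cone's complementary wedge; then every cone of aperture $\theta$ at $p$ contains some $S_1$ vertex, which is closer than anything in $S_2$, and the rest of your argument (symmetry between the clusters, large $D$, small perturbation for general position) is fine. This corrected condition is exactly the paper's: there one takes a regular polygon with interior angles at least $\pi-\varepsilon/2$ (with $\theta=\pi+\varepsilon$), so the ``exterior'' wedge at each vertex has aperture at most $\pi+\varepsilon/2<\theta$ and every cone must contain one of the two polygon neighbors. Note also that internal connectivity of each cluster is not needed for the theorem; the absence of cross edges alone already makes the graph disconnected.
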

\begin{proof}
 Let $\theta = \pi + \varepsilon$, for any $\varepsilon > 0$. Take a regular polygon $P$ with interior angles of at least $\pi - \varepsilon/2$ radians, and let $P'$ be a copy of $P$. Now place $P$ and $P'$ such that the distance between them is larger than the distance between two consecutive vertices on $P$ (see Fig.~\ref{fig:Disconnected}). Consider a vertex $v$ on $P$. The exterior angle at $v$ is at most $2\pi - (\pi - \varepsilon/2) = \pi + \varepsilon/2$ radians. As this is less than $\theta$, any cone with apex $v$ will include one of $v$'s neighbors on $P$. And since the distance between $P$ and $P'$ is larger than the distance between $v$ and its neighbors, $v$ will never connect to a vertex on $P'$. As the choice of $v$ was completely arbitrary, and $P'$ is a duplicate of $P$, this implies that no edge of \cyao will connect $P$ to $P'$.
\end{proof}

Indeed, $\pi$ is the true breaking point here: the continuous Yao graph with $\theta \leq \pi$ is always connected (for a proof, see Appendix~\ref{Appendix}).
Next we show that, despite being connected, $\cyaopi$ is not a constant spanner.

\begin{theorem}
 \label{thm:CyaoPi No Spanner}
 The continuous Yao graph \cyaopi is not a constant spanner.
\end{theorem}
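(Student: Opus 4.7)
For any $t > 1$, I will exhibit a point set on which \cyaopi\ has spanning ratio greater than $t$, via a highly eccentric ellipse. Fix $A, B > 0$ with $B/A > t$, set $a = (-A,0)$ and $b = (A,0)$ so that $a$ and $b$ are the endpoints of the minor axis of the ellipse $E : x^2/A^2 + y^2/B^2 = 1$, and place a large number $n$ of additional points $p_1, \ldots, p_n$ densely on $E$ (perturbing slightly if needed to respect the general-position assumption). Then $|ab|=2A$, while any convex curve has perimeter at least twice its diameter, so the perimeter of $E$ is at least $4B$, and each of the two arcs of $E$ between $a$ and $b$ has length at least $2B$.

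The heart of the argument is to show that \cyaopi\ on $S = \{a, b, p_1, \ldots, p_n\}$ coincides with the cycle $C_{n+2}$ induced by the cyclic order of $S$ along $E$. Fix $p \in S$ and let $p^-, p^+ \in S$ be its two cyclic neighbors along $E$. Strict convexity of $E$ forces every point of $S \setminus \{p\}$ to lie in a single open half-plane bounded by the tangent line $T_p$ at $p$; in angular terms, the directions from $p$ to the points of $S\setminus \{p\}$ all lie in a single open arc of length at most $\pi$, with $p^-$ and $p^+$ at its two extremes. For $n$ large enough (it suffices to take $n \gg B/A$), $p^-$ and $p^+$ are the two Euclidean-closest points of $S$ to $p$, so in the sequential characterisation of \cyaopi-neighbors they are the first two to be added. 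The shorter arc between their two directions---their angular span---is then just below $\pi$ and, by strict convexity of $E$, contains the direction from $p$ of every remaining $q \in S$. Hence no such $q$ can be separated from $\{p^-, p^+\}$ by a line through $p$, so $p$'s only neighbors in \cyaopi\ are $p^-$ and $p^+$.

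Given this structural claim, any path in \cyaopi\ from $a$ to $b$ must traverse an entire half of the cycle. The length of the inscribed half tends to that of the ellipse half as $n \to \infty$, and the latter is at least $2B$, so for $n$ sufficiently large the path length exceeds $2B - \varepsilon$ for any fixed $\varepsilon > 0$. Combined with $|ab|=2A$, the spanning ratio exceeds $(B-\varepsilon)/A$, which is greater than $t$ for $B/A > t$ and $\varepsilon$ small enough; since $t$ was arbitrary, \cyaopi\ is not a constant spanner. The main obstacle is the structural claim, with two angular subtleties to verify: (i)~the edge $ab$ must be absent from \cyaopi, which follows because $a$'s cyclic neighbors lie at distance $O(B/n) \ll 2A = |ab|$, so in every half-plane through $a$ containing $b$ a cyclic neighbor is strictly closer than $b$; and (ii)~for every $p \in S$---and especially near the high-curvature minor-axis endpoints $a$ and $b$---the $\approx\pi$ angular span subtended by $p^-, p^+$ must really contain the directions to all other points of $S$, which again reduces to strict convexity of $E$ together with taking $n$ sufficiently large.
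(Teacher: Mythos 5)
Your construction is essentially the paper's: an increasingly eccentric ellipse on which \cyaopi degenerates to the convex cycle, forcing any $a$--$b$ path to detour around half the ellipse, so the ratio $\approx B/A$ (the paper's $r$) is unbounded. The outline is correct, but two details deserve care. First, $a$ and $b$ are the \emph{low}-curvature points of your ellipse (curvature $A/B^2$); the delicate spots are the major-axis endpoints $(0,\pm B)$, where the curvature is $B/A^2$. Second, at those endpoints your stated density $n\gg B/A$ is not enough for the claim that each point's two nearest neighbours are its cyclic neighbours: with spacing $s\approx 4B/n$ and radius of curvature $\rho=A^2/B$, points on opposite branches near an apex sit at distance roughly $A\sqrt{s/B}$, which can undercut $s$ unless $s\ll\rho$, i.e.\ $n\gg (B/A)^2$; otherwise short diagonals across the apex can appear and the graph need not be exactly the cycle. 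Since your argument only really uses ``$n$ sufficiently large,'' this is a quantitative slip rather than a structural gap, and with the corrected density (or by simply not placing points in the high-curvature caps, which is in effect what the paper's ``place points at chord distance $1/2$ until the chains meet'' rule does) the proof goes through; your half-plane/wedge argument for why no diagonals exist is in fact spelled out more explicitly than in the paper.
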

\begin{proof}
 Consider two points $p$ and $q$ at unit distance. We will add points such that the shortest path between $p$ and $q$ in \cyaopi is arbitrarily long. The construction is illustrated in Fig.~\ref{fig:LowerBound}. We place these additional points on an ellipsis that is obtained from the circle with diameter $pq$ by stretching it vertically by a factor of $2r$, for a fixed real $r \ge 1$.  (Fig.~\ref{fig:LowerBound}a). We start by placing four points, each at distance $1/2$ from $p$ or $q$ (Fig.~\ref{fig:LowerBound}b). Then we place points at distance $1/2$ from these points, and so on, until the two chains meet (when the distance between the last point on the upwards chain from $p$ and the symmetric point from $q$ is less than $1/2$: Fig.~\ref{fig:LowerBound}c).
 
 With these points, any half-plane through a vertex $v$ %may contain one of $v$'s neighbors but  
that contains vertices on the other side of the ellipsis also contains a neighbor of $v$.
%does not contain vertices on the opposite side.
As these neighbors are always closer (before the end of the chain), no diagonals are created. Thus \cyaopi forms a convex polygon, following the contour of the ellipsis (Fig.~\ref{fig:LowerBound}d).

\begin{figure}[t]
 \centering
 \includegraphics[width = 0.85\linewidth]{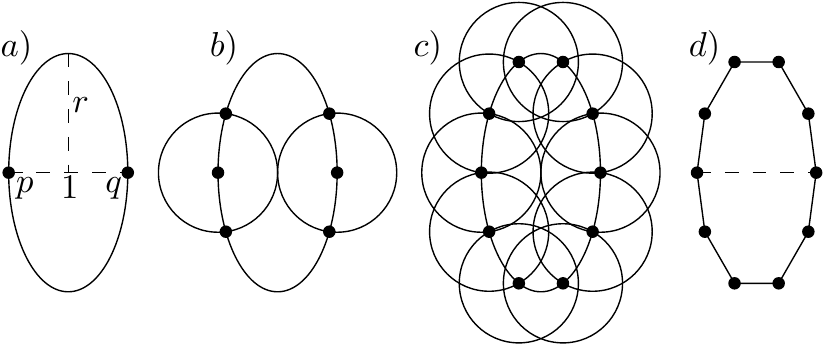}
 \caption{\small  Establishing a lower bound for the spanning ratio of \cyao\ for large values of $\theta$.}
 \label{fig:LowerBound}
\end{figure}

 As we increase $r$, the number of vertices on each chain grows. When the chains each have $k$ vertices, the shortest path between $p$ and $q$ has length at least $2  k/2 = k$. Since the distance between $p$ and $q$ remains fixed, and we can make $r$ arbitrarily large, there is no constant $t$ such that \cyaopi is a $t$-spanner.
\end{proof}

\section{Conclusions}
We introduced a new class of graphs, called continuous Yao graphs, and studied their spanning properties. We showed that, for any angle $0 < \theta \le 2\pi/3$, the continuous Yao graph \cyao is a spanner, whereas for $\pi \leq \theta \leq 2 \pi$, it is not. Furthermore, we showed that \cyao is connected for $0 < \theta \le \pi$, and possibly disconnected for $\theta > \pi$. The question whether \cyao is a spanner for $2\pi/3 < \theta < \pi$ remains open. While the construction in the proof of Theorem~\ref{thm:CyaoPi No Spanner} does give a lower bound on the spanning ratio of the continuous Yao graphs in this range, this bound seems hard to express in terms of $\theta$. For the upper bound, the proof from Section~\ref{section:The 2pi/3 cyao} appears to extend beyond $2\pi/3$, but we have not yet determined where the breaking point lies.
%
%While the construction in the proof of Theorem~\ref{thm:CyaoPi No Spanner} does give a lower bound on the spanning ratio of the continuous Yao graphs in this range, this bound seems hard to express in terms of $\theta$. For the upper bound, the proof from Section~\ref{section:The 2pi/3 cyao} appears to extend beyond $2\pi/3$, but we have not yet determined where the breaking point lies.

% One advantage of continuous Yao graphs over regular Yao graphs is that they are invariant under rotations of the input point set. However, unlike Yao graphs that guarantee a linear number of edges, continuous Yao graphs may have a quadratic number of edges in the worst case. (Imagine, for instance, the input points evenly distributed on two line segments that meet at an angle $\alpha < \pi$. For any $\theta < \alpha$, \cyao includes edges connecting each point on one line segment to each point on the other line segment.)  

An alternative %more interesting 
problem variant that maintains a linear number of edges in the output graph is one that permits each point to randomly select an  initial orientation of the entire cone wheel (as opposed to sweeping one cone continuously around the apex point). 
From Theorem~\ref{thm:CyaoPi No Spanner} we obtain as a corollary that there are point sets for which the Yao graph $Y_2$ is not a spanner, regardless of the orientation of the cones.
However, Theorem~\ref{thm:contyao} leaves open the possibility that $Y_3$ and above \emph{are} spanners under these conditions.

%Our preliminary results show that  there are point sets for which the Yao graph defined by angle $\theta=\pi$ is not a spanner, regardless of the orientation of the $2$-cone wheel at each node.

% pi + epsilon: disconnected
% <= pi: connected
% pi: not a spanner (deviates from Yao(theta/2)! -- Y_4 is a spanner)
% 2pi/3: spanner
% 2pi/3 -- pi: ??? (lower bound)

\section*{Acknowledgement}

The research for this paper was initiated at the first \emph{Workshop on Geometry and Graphs}, organized at the Bellairs Research Institute, March 10-15, 2013.

%---------------------------- Bibliography -------------------------------

% Please add the contents of the .bbl file that you generate,  or add bibitem entries manually if you like.
% The entries should be in alphabetical order
\small
\bibliographystyle{unsrt}
\bibliography{contyao}

\newpage
\appendix

\section{Omitted proofs}\label{Appendix}

\textbf{Lemma~\ref{lemma:Maximum length Case 2}}\emph{
In the configuration of Case $(ii)$, the distance between $n_a'$ and $n_b$ is at most $|u'c|$.
}
\begin{proof}
Because $n'_a\in N'_a$ and $n_b\in N_b$, we obtain an upper bound on the distance between $n'_a$ and $n_b$ by computing the maximum distance between a point in $N'_a$ and a point in $N_b$.
Using the same arguments as in the proof of Lemma~\ref{lemma:Case one maximized length}, we can show that the maximum distance is achieved by a point on the boundary of $N'_a$ and a point on the boundary of $N_b$.
We refer to a pair of points that realizes this maximum distance as a \emph{maximum $N'_a$-$N_b$-pair}.

One can verify that every point in $N_b$ is farther from $u'$ than from any other point in $N'_a$. Therefore, it suffices to find the point farthest from $u'$ in $N_b$. Note also that the circle centered at $u'$ that passes through any point in the circular arc of $N_b$ does not contain $c$. Therefore, it suffices to find the point farther from $u'$ in the boundary of the triangle $\triangle(w, c, v')\subset N_b$.

As we have exact expressions for $u'$ and for the vertices on the boundary of $\triangle(w, c, v')$, we can verify that the maximum $N'_a$-$N_b$-pair is found when when $n'_a = u'$ and $n_b = c$, proving our result.% see Fig.~\ref{fig:Zoom Case 2}.
\end{proof}

\begin{theorem}\label{theorem: cyan pi is connected}
 For $\theta \leq \pi$, the continuous Yao graph \cyao is connected.
\end{theorem}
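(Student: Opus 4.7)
The plan is to reduce the statement to showing that \cyaopi is connected, and then prove connectivity of \cyaopi by a short sweep argument. Recall from the introduction that $\cyao \supseteq \cyaopi$ whenever $\theta \leq \pi$ (because larger cone apertures yield sparser graphs, as already observed). A supergraph of a connected graph is connected, so it suffices to settle the case $\theta=\pi$.

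To show \cyaopi is connected, I would first fix a unit vector $\hat{u}$ in the plane that is not perpendicular to any line spanned by a pair of distinct points of $S$; this is possible because only $\binom{|S|}{2}$ directions are forbidden. With this choice, the map $p\mapsto p\cdot\hat{u}$ is injective on $S$, inducing a strict total order on $S$ with a unique maximum element $p^*$.

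For each $p\in S\setminus\{p^*\}$, consider the closed half-plane
\[
H_p\;=\;\bigl\{x\in\mathbb{R}^2 : (x-p)\cdot\hat{u}\geq 0\bigr\},
\]
which is exactly a cone of aperture $\pi$ with apex $p$. Because $p^*\in H_p$, the set $H_p\cap(S\setminus\{p\})$ is non-empty; let $q_p$ denote its closest point to $p$, which is unique by the paper's general position assumption. Then, by the very definition of \cyaopi, the edge $pq_p$ belongs to \cyaopi. Moreover, our choice of $\hat{u}$ ensures $q_p\cdot\hat{u} > p\cdot\hat{u}$, so $q_p$ is strictly greater than $p$ in the induced order.

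Starting from any vertex $p$, iterating $p\mapsto q_p$ yields a sequence $p,q_p,q_{q_p},\dots$ with strictly increasing $\hat{u}$-projection, so it is finite and must terminate at $p^*$. Hence every vertex lies in the connected component of $p^*$, proving \cyaopi is connected; combined with the inclusion above this completes the proof for all $\theta\leq\pi$. The only subtlety is the generic choice of $\hat{u}$ to break projection ties on the boundary of $H_p$; once that is in place the argument is direct, and I do not anticipate any more substantial obstacle.
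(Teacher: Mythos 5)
Your proof is correct and follows essentially the same idea as the paper's: a monotone walk toward an extremal vertex, where each non-extremal vertex has an edge into the forward half-plane because that half-plane is (covered by) a cone of the graph. Your two small refinements -- reducing to $\theta=\pi$ via the inclusion $cY(\pi)\subseteq cY(\theta)$, and choosing a generic direction $\hat{u}$ so that the maximum is unique -- simply replace the paper's explicit family of $\theta$-cones covering the half-plane and its tie-breaking rotation among rightmost vertices.
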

\begin{proof}
 Consider a set $C_r$ of cones whose union is exactly the right half-plane. Such a set can be constructed by starting with the cone whose left boundary aligns with the positive $y$-axis, and rotating by $\pi - \theta$ degrees until the right boundary aligns with the negative $y$-axis. Since $\theta \leq \pi$, this set is non-empty. 
 Now, if a vertex $v$ is not a rightmost vertex, there is a cone $C$ in $C_r$ that is not empty. Since $C$ is completely contained in the right half-plane, the closest vertex in $C$ must lie further to the right than $v$. Thus, there is an edge connecting $v$ to a vertex to its right. Since we only have finitely many points, by repeating this, we obtain a path from any vertex to a rightmost vertex. Finally, by slightly rotating the right half plane at each rightmost point (so that it includes only rightmost vertices), we obtain a path connecting all rightmost vertices (if several rightmost vertices exist). 
 Thus, by concatenating the paths from two arbitrary points $a$ and $b$ to rightmost vertices to the path connecting these rightmost vertices, we obtain a path between $a$ and $b$.
\end{proof}

\end{document}